\documentclass[conference]{IEEEtran}%
\usepackage{amsmath}
\usepackage{bbm}
\usepackage{graphicx}
\usepackage{epsfig}
\usepackage{array}
\usepackage{psfrag}
\usepackage{amssymb}
\usepackage{mathdots}
\usepackage{color}
\usepackage{pstricks,pst-node,pst-text,pst-3d,pst-plot}
\usepackage{psfrag}
\usepackage{enumerate}
\usepackage{url,cite}
\usepackage{amsfonts}%
\usepackage{amsthm}
\usepackage{dsfont}%
\usepackage{verbatim}%
\usepackage{setspace}
\usepackage{float}
\usepackage{url}
\usepackage{fancyhdr}
\usepackage{bm}
\pdfoptionpdfminorversion=6
\usepackage{lipsum} 
\newcommand\blfootnote[1]{%
  \begingroup
  \renewcommand\thefootnote{}\footnote{#1}%
  \addtocounter{footnote}{-1}%
\endgroup
}
\singlespacing
\usepackage{algorithm}
\usepackage{algorithmic}

\newtheorem{cond}{Condition}
\newtheorem{Def}{Definition}

\newcommand{\bRopt}{R^{\rm (opt)}_i}

\newcommand{\Oneik}{\mathds{1}_i(k)}
\newcommand{\Ai}{A_i}

\newcommand{\Rik}{R_i(k)}

\newcommand{\Qik}{Q_i(k)}

\newcommand{\Yik}{Y_i(k)}
\newcommand{\bfYk}{{\mathbf Y}(k)}
\newcommand{\bfQk}{{\mathbf Q}(k)}
\newcommand{\LW}[1]{W_0\lb #1e^{-1}\rb}
\newcommand{\gammaikm}{\gamma_i^{(m)}(k)}
\newcommand{\gammaik}{\gamma_i(k)}

\newcommand{\NR}{N_{\rm R}}
\newcommand{\NNR}{N_{\rm NR}}

\newcommand{\sM}{\script{M}}
\newcommand{\sN}{\script{N}}
\newcommand{\Nk}{N_k}
\newcommand{\sNk}{\script{N}_k}
\newcommand{\sNR}{\script{N}_{\rm R}}
\newcommand{\sNRk}{\script{N}_{\rm R}\lb k\rb}

\newcommand{\sNNR}{\script{N}_{\rm NR}}

\newcommand{\SRk}{\script{S}_{\rm R}\lb k\rb}
\newcommand{\SRstk}{\script{S}_{\rm R}^*\lb k\rb}

\newcommand{\SRkst}{\script{S}_{\rm R}^*\lb k\rb}
\newcommand{\SNRk}{\script{S}_{\rm NR}\lb k\rb}
\newcommand{\bfPRk}{\bfP\lb k\rb}

\newcommand{\PRik}{P_i\lb k\rb}

\newcommand{\PRistk}{P_{\ist}\lb k\rb}
\newcommand{\PNRik}{P_i\lb k\rb}
\newcommand{\bmuRi}{\overline{R}_i}

\newcommand{\bfmuRk}{{\bm \mu}\lb k\rb}

\newcommand{\muRik}{\mu_i\lb k\rb}

\newcommand{\muRistk}{\mu_{\ist}\lb k\rb}

\newcommand{\parPRdef}[1]{\triangleq \left[#1_i(k)\right]_{i\in\sN}}

\newcommand{\PsiRik}{\Psi_{\rm R}(i,k)}
\newcommand{\PsiNRik}{\Psi_{\rm NR}(i,k)}
\newcommand{\PsiNRikst}{\Psi_{\rm NR}^*(i,k)}
\newcommand{\PsiNRistkst}{\Psi_{\rm NR}^*(\ist,k)}

\newcommand{\ist}{i_{\rm NR}^*}

\newcommand{\bfrk}{{\bm r}\lb k\rb}

\newtheorem{thm}{Theorem}
\newtheorem{lma}{Lemma}
\DeclareMathOperator{\E}{\mathbb{E}}
\newenvironment{proofsketch}{\par{\it Proof Sketch:}}{\qed\par}

\newcommand{\bP}{\overline{P}}

\newcommand{\lb}{\left (}
\newcommand{\rb}{\right )}

\newcommand{\script}[1]{{\mathcal {#1}}}
\newcommand{\Pavg}{P_{\rm avg}}
\newcommand{\Pmax}{P_{\rm max}}

\newcommand{\EE}[1]{\E \left[ #1 \right]}
\newcommand{\EEU}[1]{\E_{\bfU(k)} \left[ #1 \right]}
\newcommand{\EEY}[1]{\E_{\bfU(k)} \left[ #1 \right]}

\newcommand{\bfP}{{\bf P}}

\newcommand{\bgamma}{\overline{\gamma}}

\newcommand{\bfQ}{{\bf Q}}

\newcommand{\bfY}{{\bf Y}}
\newcommand{\bfU}{{\bf U}}

\newcommand{\parRdef}[1]{\triangleq [#1_1(k),\cdots,#1_{\NR}(k)]^T}

\newcommand{\Rmax}{R_{\rm max}}

\newcommand{\Ts}{T}


\begin{document}
\title{Power Control and Scheduling under Hard Deadline Constraints for On-Off Fading Channels}

\author{Ahmed Ewaisha, Cihan Tepedelenlio\u{g}lu\\
\small{School of Electrical, Computer, and Energy Engineering, Arizona State University, USA.}\\
\small{Email:\{ewaisha, cihan\}@asu.edu}\\
}
\maketitle
\blfootnote{The work in this paper has been supported by NSF Grant CCF-1117041.}
\begin{abstract}
We consider the joint scheduling-and-power-allocation problem of a downlink cellular system. The system consists of two groups of users: real-time (RT) and non-real-time (NRT) users. Given some average power constraint on the base station, the problem is to find an algorithm that satisfies the RT and NRT quality-of-service (QoS) constraints. The RT QoS constraints guarantee the portion of RT packets that miss their deadline are no more than a pre-specified threshold. On the other hand, the NRT QoS is only to guarantee the stability of their queues. We propose a sum-rate-maximizing algorithm that satisfy all QoS and average power constraints. The proposed power allocation policy has a closed-form expression for the two groups of users. However, the power policy of the RT users differ in structure from the NRT users. The proposed algorithm is optimal for the on-off channel model with a polynomial-time scheduling complexity. Using extensive simulations, the throughput of the proposed algorithm is shown exceed existing approaches.
\end{abstract}

\section{Introduction}

Quality-of-service-based scheduling has gained strong attention recently. It is shown in \cite{Lai20131689} and \cite{piro2011two} that quality-of-service-aware scheduling results in a better performance in LTE systems compared to quality-of-service-unaware techniques. Depending on the application, quality-of-service (QoS) metrics may refer to long-term throughput \cite{6848162}, short-term throughput \cite{hsiehheavy}, per-user average delay \cite{ewaisha2015joint}, average number of packets missing a specific deadline \cite{A_Theory_of_QoS}, or the average time a user waits to receive its data \cite{hou2015qoe}. Real-time applications, such as audio and video applications, need to be served by algorithms that takes average packet delays or the probability of a packet missing the deadline into consideration. This is because these applications have stringent requirements for the service times of their packets. If a packet is not scheduled to be transmitted on time, the corresponding user might experience intermittent connectivity of its audio or video.

The problem of scheduling for wireless systems under a hard deadline constraint has been widely studied in the literature (see, e.g., \cite{hou2011survey} and \cite{radhakrishnan2016review} for a survey). In \cite{A_Theory_of_QoS} the authors consider binary erasure channels and present a sufficient and necessary condition to determine if a given problem is feasible. The work is extended in three different directions. The first direction studies the problem under delayed feedback \cite{piro2011two}. The second considers general channel fading models. An example is \cite{hou2010scheduling} that present a scheduling algorithm that guarantees a pre-specified portion of the packets to be transmitted by the deadline. The third direction studies multicast video packets that have strict deadlines and utilize network coding to improve the overall network performance \cite{Hou:2015:BDT:2823437.2823441,Adaptive_NC_Deadline}. Unlike the time-framed assumption in the previous works, the authors of \cite{kang2013performance} assume that arrivals and deadlines do not have to occur at the edges of a time frame and present a scheduling algorithm with a fixed power allocation. In \cite{Elastic_Inelastic} the authors study the scheduling problem in the presence of real-time and non-real-time data. Unlike real-time data, non-real-time data do not have a strict deadline but rather can be transmitted at any arbitrary point in time. However, there is an implicit constraint that the queues of the non-real-time data need to be stable. Using the dual function approach, the problem was decomposed into an online algorithm that guarantees network stability and real-time users' satisfaction.

Power allocation has not been considered for RT users in the literature, to the best of our knowledge. In this paper, we study the problem of resource allocation in the presence of simultaneous RT and NRT users in a downlink cellular system. We formulate the problem as a joint scheduling-and-power-allocation problem to maximize the sum throughput of the NRT users subject to an average power constraint on the base station (BS), as well as a delivery ratio requirement constraint for each RT user. The delivery ratio constraint requires a minimum ratio of packets to be transmitted by a hard deadline, for each RT user. Perhaps the closest to our work are references \cite{Elastic_Inelastic} and \cite{Ewaisha_TVT2015}. The former does not consider power allocation, while the latter assumes that only one user can be scheduled per time slot. 
Our contributions in this paper are as follows:
\begin{itemize}
	\item We present a rate-maximizing joint scheduling-and-power-allocation algorithm. We show that this algorithm satisfies the average power constraint and delivery ratio requirement constraint.
	\item We present closed-form expressions for the power allocation policy used by our algorithm. It is shown that the power allocation expressions for the RT and NRT users are different in structure.
	\item We show that the complexity of our optimal scheduling algorithm is polynomial in the number of users in the network.
\end{itemize}

The rest of this paper is organized as follows. In Section \ref{Model} we present the system model and the underlying assumptions. The problem is formulated in Section \ref{Problem_Formulation}. For the on-off channel model, the proposed power-allocation and scheduling algorithm as well as its optimality is presented in Section \ref{Proposed_Algorithm}. Simulation results and comparisons with baseline approaches is presented in Section \ref{Results}. Finally, the paper is concluded in Section \ref{Conclusion}.


\section{System Model}
\label{Model}
We assume a time slotted downlink system with a single base station (BS) and a single frequency channel. There are $N$ users in the system indexed by the set $\sN\triangleq\{1, \cdots,N\}$. The set of users is divided into two sets: the RT set of users $\sNR\triangleq\{1\cdots\NR\}$ and the NRT set of users $\sNNR\triangleq\{\NR+1\cdots \NR+\NNR\}$ with $\NR$ and $\NNR$ denoting the number of RT and NRT users, respectively. Following \cite{A_Theory_of_QoS}, we model the channel between the BS and the $i$th user as a fading channel with gain $\gammaik=1$ if it is in a ``good'' state during the $k$th slot and $\gammaik=0$ otherwise following a Bernoulli process. Channel gains are fixed over the whole slot and change independently in subsequent slots and are independent across users. Moreover, the channel state information for all users are known to the BS at the beginning of the each slot.

\subsection{Packet Arrival Model}
Following \cite{hou2010scheduling} we assume that the duration of each slot $\Ts$ is long enough that more than one user can be scheduled in this slot. Let $a_i(k)\in\{0,1\}$ be the indicator of a packet arrival for user $i\in\sN$ at the beginning of the $k$th slot. $\{a_i(k)\}$ is assumed to be a Bernoulli process with rate $\lambda_i$ packets per slot and assumed to be independent across all users in the system. We assume that packets arriving for the RT users are real-time packets while those for the NRT users are non-real-time ones. Real-time packets have a strict transmission deadline. If a packet is not transmitted by this deadline, this packet is dropped out of the system and does not contribute towards the throughput of the user. Here we assume that real-time packets arriving at the beginning of the $k$th slot have their deadline at the end of this slot. On the other hand, non-real-time packets do not have a strict deadline. Thus each packet remains in the (infinite-sized \cite{Bertsekas_Data_Networks}) buffer until it is completely transmitted.

\subsection{Service Model}
\begin{figure}%
\centering
\includegraphics[width=1\columnwidth]{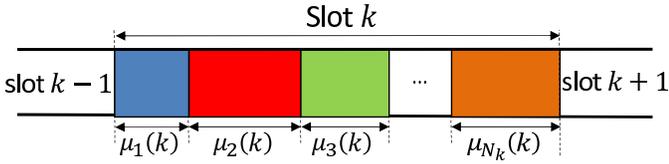}%
\caption{In the $k$th time slot, the BS chooses $\Nk$ users to be scheduled. All time slots have a fixed duration of $\Ts$ seconds.}%
\label{Time_Slot}%
\end{figure}
At the beginning of the $k$th slot, the BS selects a set of RT users denoted $\SRk\subseteq\sNR$ and a set of NRT users denoted $\SNRk\subseteq\sNNR$. Thus a total of $\Nk\triangleq\vert\sNk\vert$ users are scheduled at slot $k$ where $\sNk\triangleq\SRk\cup\SNRk$ (Fig. \ref{Time_Slot}). Moreover, the BS assigns an amount of power $\PRik$ for every user $i\in\sNk$. This dictates the transmission rate for each scheduled users according to the capacity of the channel given by $\Rik=\log \lb 1+\PRik\gammaik\rb$, $\forall i\in\sN$. Finally, the BS determines the duration of time, out of the $\Ts$ seconds, that will be allocated for each scheduled user. Define the variable $\muRik$ to represent the duration of time, in seconds, assigned for user $i\in\sN$ during the $k$th slot (Fig. \ref{Time_Slot}). Hence, $\muRik\in[0,\Ts]$ for all $i\in\sN$. The BS decides the value of this variable for each user $i\in\sN$ at the beginning of slot $k$. Unlike NRT users which do not have to transmit their packets at a particular time slot, RT users have a strict deadline. Hence, if an RT user was scheduled at slot $k$, then it should be allocated the channel for a duration of time that allows the transmission of the whole packet. Thus we have
\begin{equation}
\muRik=\left\{
\begin{array}{lll}
	\frac{L}{\Rik} &\mbox{ if }i\in\SRk\\
	0 &\mbox{ if }i\in\sNR\backslash\SRk
\end{array}
\right.,
\label{Num_Slots}
\end{equation}
where $L$ is the number of bits per packet, that is assumed to be fixed for all packets in the system. Equation \eqref{Num_Slots} means that, depending on the transmission power, if RT user $i$ is scheduled at slot $k$, then it is assigned as much time as required to transmit its whole packet. Hence, given \eqref{Num_Slots}, $\muRik$ is not an optimization variable and it becomes sufficient for the BS to decide the value of $\PRik$ and the set $\SRk$ for the RT users. On the other hand, for the NRT users, the BS needs to optimize over $\{\muRik\}_{i\in\sNNR}$ as well. Thus the queue associated with user $i\in\sNNR$ is given by
\begin{equation}
Q_i(k+1)=\lb \Qik + Lr_i(k)-\muRik\Rik\rb^+
\label{Queues}
\end{equation}
where $r_i(k)$ is the admission control decision variable at the beginning of slot $k$. The BS sets $r_i(k)=1$ when two conditions are satisfied: 1) if $a_i(k)=1$ and 2) if the BS decides to admit this arriving packet to the $i$th buffer. Otherwise the BS sets $r_i(k)=0$. The time-average number of packets admitted for user $i$ is
\begin{equation}
\Ai\triangleq \limsup_{K\rightarrow \infty}\frac{1}{K}\sum_{k=1}^K \EE{r_i(k)}, \hspace{0.25in} i\in\sNNR
\label{Avg_Admit}\\
\end{equation}

The BS's goal is solve this power allocation and scheduling problem along with the admission control decisions to maximize the NRT users' sum rate under the system constraints. In the next section we present this problem formally.

\section{Problem Formulation}
\label{Problem_Formulation}
In this work, we are interested in finding the scheduling and power allocation algorithm that maximizes the sum-rate of all NRT users subject to the system constraints. In this paper we restrict our search to slot-based algorithms which, by definition, takes the decisions only at the beginning of the slots. Since the channel coefficients do not change within a slot, restricting our search to the space of slot-based algorithms does not lose optimality.

Now define the time-average rate, in packets per slot, of user $i$ to be
\begin{equation}
\bmuRi\triangleq \liminf_{K\rightarrow \infty}\frac{1}{LK}\sum_{k=1}^K\muRik \Rik, \hspace{0.25in} i\in\sNNR
\label{Avg_Rate}\\
\end{equation}
while the time-average power consumed by the BS is $\bP\triangleq\limsup_{K\rightarrow \infty}\frac{1}{K}\sum_{k=1}^K P(k)$ where $P(k)$ is the power consumed by the BS during the $k$th slot which is given by $P(k)\triangleq \frac{1}{\Ts}\sum_{i\in\sN}\PRik \muRik$. Thus the problem we are interested to solve in this paper is to find the scheduling, power allocation and packets admission decisions, at the beginning of each slot, that solve the following problem
\begin{align}
\label{Prob_DL}
\text{maximize} &\sum_{i\in\sNNR}\bmuRi,\\
\text{subject to } & r_i(k)\leq a_i(k) \hspace{0.25in} \forall i\in\sNNR,
\label{Admission_Decision}\\
& \bmuRi\geq\Ai \hspace{0.25in}i\in\sNNR,
\label{NRT_QoS}\\
&\bmuRi\geq\lambda_i q_i \hspace{0.15in}i\in\sNR,
\label{RT_QoS}\\
&\bP\leq \Pavg,
\label{P_avg}\\
& 0\leq\PRik\leq \Pmax \hspace{0.25in} \forall i\in\sN,
\label{P_max}\\
&\sum_{i\in\sN}\muRik= \Ts, \hspace{0.25in} k\geq 1,
\label{Single_Tx_at_a_Time}\\
&0\leq\muRik\leq\Ts \hspace{0.25in} \forall i\in\sN,
\label{muRik_Constr_NRT}\\
\nonumber\text{variables } &\{\bfmuRk,\bfPRk,\bfrk\}_{k=1}^\infty,
\end{align}
where $\bfmuRk\parPRdef{\mu}$ while $\bfPRk\parPRdef{P}$. Constraint \eqref{Admission_Decision} says that no packets should be admitted to the $i$th buffer if no packets arrived to user $i$. Constraint \eqref{NRT_QoS} indicates that the average service rate for the NRT users has to be higher than the average number of packets admitted to the buffer. Constraint \eqref{RT_QoS} indicates that the resources allocated to a RT user $i$ need to be such that the fraction of packets transmitted by the deadline are greater than the required QoS $q_i$. Constraint \eqref{P_avg} is an average power constraint on the BS transmission power. Finally constraint \eqref{Single_Tx_at_a_Time} guarantees that the sum of durations of transmission of all scheduled users doesn't exceed the slot duration $\Ts$. In this paper, we assume that the NRT user with the longest queue has enough packets, at each slot, to fit the whole slot duration which is a valid assumption in the heavy traffic regime. It will be clear that generalizations to the non-heavy traffic regime is possible by allowing multiple NRT users to be scheduled but this is omitted for brevity.
\vspace{-0.1in}
\section{Proposed algorithm}
\label{Proposed_Algorithm}
We solve this problem using the Lyapunov optimization technique \cite{li2011delay} where each average constraint in problem \eqref{Prob_DL} is associated with a ``virtual queue'' which represents how much this average constraint is not satisfied. We discuss these virtual queues next.
\subsection{Virtual Queues}
We define a virtual queue associated with each RT user as follows
\begin{equation}
Y_i(k+1)=\lb \Yik + a_i(k)q_i-\Oneik\rb^+, \hspace{0.25in} i\in\sNR,
\label{DL_VQ}
\end{equation}
where $\Oneik\triangleq\mathds{1}\lb\muRik\rb$ with $\mathds{1}(\cdot)=1$ if its argument is non-zero and $\mathds{1}(\cdot)=0$ otherwise. $\Yik$ is a measure of how much user $i$ is not satisfying constraint \eqref{RT_QoS}. For notational convenience we denote $\bfY(k)\parRdef{Y}$. We will later show a sufficient condition on $Y_i(k)$ for constraint \eqref{RT_QoS} to be satisfied. Hence, we say that the queue $Y_i(k)$ is associated with constraint \eqref{RT_QoS}. Similarly, we define the virtual queue $X(k)$, associated with constraint \eqref{P_avg}, as
\begin{equation}
X(k+1)=\lb X(k) + P(k)-\Pavg\rb^+
\label{P_avg_VQ}
\end{equation}
We observe that the virtual queues $\bfY(k)$ and $X(k)$ are analogous to the real queues $\bfQ(k)$. The latter can be thought of as a queue that is associated with the constraint \eqref{NRT_QoS}. To provide a sufficient condition on the queues (virtual or real) for the corresponding constraints to be satisfied, we use the definition of \emph{mean rate stability} of queues as in \cite[Definition 1]{li2011delay} to state the following lemma.

\begin{lma}
\label{Mean_Rate_Lemma}
If, for some $i\in\sNNR$, $\{Q_i(k)\}_{k=0}^\infty$ is mean rate stable, then constraint \eqref{NRT_QoS} is satisfied for this user $i$.
\end{lma}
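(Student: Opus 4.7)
The plan is to apply the textbook Lyapunov-drift argument of \cite{li2011delay}: once the $(\cdot)^+$ clipping in the queue recursion \eqref{Queues} is dropped, one gets a sample-path lower bound on cumulative service, and mean rate stability then forces the residual queue-length term to vanish on average, leaving exactly \eqref{NRT_QoS}.

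Concretely, the first step is to observe that $(x)^+\ge x$, so \eqref{Queues} yields
\begin{equation*}
Q_i(k+1)\ge Q_i(k)+L\,r_i(k)-\muRik\Rik.
\end{equation*}
Telescoping this inequality from $k=0$ to $K-1$, taking expectations, and dividing through by $LK$ gives
\begin{equation*}
\frac{1}{LK}\sum_{k=0}^{K-1}\E[\muRik\Rik]\ge \frac{1}{K}\sum_{k=0}^{K-1}\E[r_i(k)]-\frac{\E[Q_i(K)]-\E[Q_i(0)]}{LK}.
\end{equation*}

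The second step is to let $K\to\infty$. By \cite[Def.~1]{li2011delay}, mean rate stability of $\{Q_i(k)\}_{k=0}^\infty$ is exactly the statement $\E[Q_i(K)]/K\to 0$, and the initial-condition term is $O(1/K)$, so the entire correction on the right vanishes. Taking $\limsup_{K\to\infty}$ of both sides and using that adding a null sequence does not alter a $\limsup$, the right-hand side collapses to $\Ai$ as defined in \eqref{Avg_Admit}, while the left-hand side is the expected long-run service rate of user $i$, which is the natural expected-value counterpart of $\bmuRi$ in \eqref{Avg_Rate}. This is precisely constraint \eqref{NRT_QoS}.

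The main subtlety to flag is a notational mismatch between \eqref{Avg_Rate} and \eqref{Avg_Admit}: $\bmuRi$ is defined as a pathwise $\liminf$ while $\Ai$ is an expected $\limsup$. The cleanest reading, and the one implicit in the Lyapunov framework, is to interpret \eqref{NRT_QoS} in expectation, in which case the drift argument above closes the lemma directly. Strengthening to an almost-sure statement would require a small extra Fatou-type or ergodic step, which is routine under the Bernoulli arrivals and bounded channel rates assumed in Section \ref{Model}, but is not the essential content of the lemma.
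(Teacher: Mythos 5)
Your overall strategy --- drop the $(\cdot)^+$ in \eqref{Queues}, telescope, normalize by $K$, and invoke mean rate stability to kill the boundary term --- is the standard argument and is surely what the paper's omitted proof (which defers to \cite[Lemma 1]{ewaisha2015joint}) has in mind. You also correctly flag the pathwise-versus-expectation mismatch between \eqref{Avg_Rate} and \eqref{Avg_Admit}; reading \eqref{NRT_QoS} in expectation is a reasonable repair.

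However, your final limit-taking step has a genuine gap. The telescoped inequality gives, for every $K$,
\begin{equation*}
a_K \;\triangleq\; \frac{1}{LK}\sum_{k=0}^{K-1}\EE{\muRik\Rik} \;\ge\; \frac{1}{K}\sum_{k=0}^{K-1}\EE{r_i(k)} \;-\; o(1) \;\triangleq\; b_K - o(1).
\end{equation*}
Taking $\limsup$ of both sides, as you propose, yields $\limsup_K a_K\ge\limsup_K b_K=\Ai$; taking $\liminf$ instead yields $\liminf_K a_K\ge\liminf_K b_K$. Neither is what \eqref{NRT_QoS} demands, namely $\liminf_K a_K\ge\limsup_K b_K$, because $\bmuRi$ is defined in \eqref{Avg_Rate} as a $\liminf$ while $\Ai$ is defined in \eqref{Avg_Admit} as a $\limsup$; the pointwise inequality $a_K\ge b_K$ never converts a $\liminf$ on the left into a bound against a $\limsup$ on the right (take $a_K=b_K=(-1)^K$: the hypothesis holds with equality, yet $\liminf_K a_K=-1<1=\limsup_K b_K$). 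Your remark that ``adding a null sequence does not alter a $\limsup$'' is true but only controls the $o(1)$ term, not the mismatch between the two kinds of limits. To close the argument you must additionally show that $\frac{1}{K}\sum_{k}\EE{r_i(k)}$ actually converges --- for instance because the threshold admission rule of Algorithm \ref{Scheduling_Alg} together with the i.i.d.\ Bernoulli arrivals makes the admitted-rate time average a genuine limit --- so that $\liminf_K b_K=\limsup_K b_K=\Ai$. With that one extra observation your proof is complete; without it, what you have established is only $\limsup_K a_K\ge\Ai$, which is strictly weaker than the lemma as stated.
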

\begin{proofsketch}
Proof follows along the lines of \cite[Lemma 1]{ewaisha2015joint} and is omitted here due to space limitations.
\end{proofsketch}
Lemma \ref{Mean_Rate_Lemma} shows that when the virtual queue $\Qik$ is mean rate stable, then constraint \eqref{NRT_QoS} is satisfied for user $i\in\sNNR$. Similarly, if $\{Y_i(k)\}_{k=0}^\infty$ and $\{X(k)\}_{k=0}^\infty$ are mean rate stable, then constraints \eqref{RT_QoS} and \eqref{P_avg} are, respectively, satisfied. Thus, our objective would be to devise an algorithm that guarantees the mean rate stability of $\Qik$ for all RT users, $\Yik$ for all NRT users as well as $X(k)$.
\subsection{Motivation of the Proposed Algorithm}
\label{Motivation_DL}
Following the Lyapunov optimization technique as in \cite{li2011delay}, we define the Lyapunov function
\begin{equation}
L\lb U(k)\rb\triangleq \frac{1}{2}\sum_{i\in\sNR}{Y_i^2(k)}+\frac{1}{2}\sum_{i\in\sNNR}{Q_i^2(k)}+\frac{1}{2}X^2(k),
\label{Lyapunov_Func}
\end{equation}
where $U(k)\triangleq \lb\bfYk,\bfQk,X(k)\rb$, and the Lyapunov drift as $\Delta (k) \triangleq \EEY{L(k+1) - L(k)}$, where $\EEU{x}\triangleq \EE{x\vert U(k)}$ is the conditional expectation of the random variable $x$ given $U(k)$. Squaring \eqref{Queues}, \eqref{DL_VQ} and \eqref{P_avg_VQ} taking the conditional expectation then summing over $i$, the drift becomes bounded by
\begin{equation}
\Delta(k)\leq C_1+\Psi(k),
\label{Drift_Bound}
\end{equation}
where $C_1\triangleq C/2$ with $C\triangleq \sum_{i\in\sNR}\lb q_i^2+1\rb+\Pmax^2+\Pavg^2+\NNR\left[ L^2+\Ts^2\Rmax^2\right]$ and we use $\Rmax\triangleq\log\lb1+\Pmax\rb$, while
\begin{multline}
\Psi(k)\triangleq \EEU{\sum_{i\in\sNR}\PsiRik}+\sum_{i\in\sNR}\Yik\lambda_i q_i\\
-X(k)\Pavg+\EEU{\sum_{i\in\sNNR}\PsiNRik\muRik}\\
+\sum_{i\in\sNNR}L\Qik r_i(k).
\label{Drift_Min}
\end{multline}
where $\PsiRik$ and $\PsiNRik$ are given by
\begin{equation}
\PsiRik\triangleq \lb\Yik-\frac{LX(k)\PRik}{\Ts\Rik}\rb\Oneik, \hspace{0.05in} i\in\sNR,\\
\label{PsiRik}
\end{equation}
\begin{equation}
\PsiNRik\triangleq \Qik\Rik-\frac{X(k)\PRik}{\Ts}, \hspace{0.05in} i\in\sNNR,
\label{PsiNRik}
\end{equation}
respectively, where we used \eqref{Num_Slots} in \eqref{PsiRik}. The proposed algorithm minimizes the first two terms of \eqref{Drift_Min} by scheduling the users and allocating their powers at each slot to solve
\begin{equation}
\begin{array}{ll}
	&\text{max}\sum_{i\in\SRk}\PsiRik + \sum_{i\in\sNNR}\PsiNRik\muRik\\
	&\text{subject to } \eqref{P_max}, \eqref{Single_Tx_at_a_Time} \text{ and } \eqref{muRik_Constr_NRT}
\end{array}
\label{Max_Prob}
\end{equation}

Problem \eqref{Max_Prob} is a joint power allocation and scheduling problem. To solve this mixed-integer programming problem optimally, we first find the optimal power-allocation-and-scheduling policy for the NRT users through the following lemma. Then we proceed with the RT users.

\begin{lma}
\label{NRT_Lemma_On_Off}
If an NRT user $i$ is scheduled to transmit any of its NRT data during the $k$th slot, then the optimum power level for this NRT with respect to (w.r.t.) problem \eqref{Max_Prob} is given by
\begin{equation}
\PNRik=\min\lb\lb\frac{Q_i(k)}{X(k)}-1\rb^+,\Pmax\rb, \hspace{0.25in} i\in\sNNR.
\label{H2O_Pow_On_Off}
\end{equation}
Moreover, in the heavy traffic regime, if an NRT user is going to be scheduled at slot $k$, then the optimum user w.r.t. problem \eqref{Prob_DL} is given by
\begin{equation}
\ist=\arg\max_{i\in\sNNR}\Qik
\label{ist}
\end{equation}
with ties broken arbitrarily.
\end{lma}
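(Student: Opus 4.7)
The plan is to prove the two claims in sequence: the power formula follows from a one-variable concave optimization, and the user-selection rule follows from a monotonicity argument on the resulting optimal-value function.

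For the first claim, I fix an NRT user $i$ that is actually scheduled at slot $k$, so $\muRik>0$. If $\gammaik=0$ then $\Rik=0$ and $\PsiNRik=-X(k)\PRik/\Ts$, which is maximized at $\PRik=0$, consistent with \eqref{H2O_Pow_On_Off}. In the relevant case $\gammaik=1$, the user's contribution to the objective of \eqref{Max_Prob} is $\muRik\cdot\PsiNRik$ with $\PsiNRik=\Qik\log(1+\PRik)-X(k)\PRik/\Ts$. Since $\muRik>0$ is a fixed positive scalar, it suffices to maximize $\PsiNRik$ over $\PRik\in[0,\Pmax]$. The objective is strictly concave in $\PRik$ (a strictly concave log term minus an affine one), hence admits a unique maximizer. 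I would set $\partial\PsiNRik/\partial\PRik=0$, solve explicitly for the unconstrained stationary point, and then project onto $[0,\Pmax]$: the non-negativity constraint is enforced by the $(\cdot)^+$ operator (active precisely when $\Qik$ is small relative to $X(k)$), and the peak-power constraint by the outer $\min(\cdot,\Pmax)$. This yields exactly \eqref{H2O_Pow_On_Off}.

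For the second claim, I invoke the heavy-traffic assumption stated immediately before the lemma: at most one NRT user is scheduled per slot, and that user has enough backlog to occupy all of the remaining time, call it $\mu\triangleq\Ts-\sum_{j\in\SRk}\mu_j(k)$. Substituting this into the NRT part of \eqref{Max_Prob} reduces the scheduling decision to choosing the NRT user (with $\gammaik=1$) that maximizes the optimal value $V_i(k)\triangleq\max_{\PRik\in[0,\Pmax]}\bigl[\Qik\log(1+\PRik)-X(k)\PRik/\Ts\bigr]$, since the common factor $\mu$ does not depend on $i$. For any fixed admissible $\PRik$, the bracketed expression is affine in $\Qik$ with non-negative slope $\log(1+\PRik)\geq 0$; the pointwise maximum of a family of non-decreasing functions is non-decreasing, so $V_i(k)$ is non-decreasing in $\Qik$. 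Hence any index $\ist=\arg\max_{i\in\sNNR}\Qik$ is a maximizer of $V_i(k)$, which establishes \eqref{ist}.

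The main obstacles I expect are organizational rather than technical: (i) unifying the three branches of the power formula (zero, interior, and clipped) into the single $\min/(\cdot)^+$ expression in \eqref{H2O_Pow_On_Off}; (ii) handling $\gammaik=0$ cleanly so that the boundary expression still returns the natural choice $\PRik=0$; and (iii) articulating precisely why in the heavy-traffic setting the remaining slot time $\mu$ is independent of which NRT user is chosen, so that the scheduling decision decouples from the power allocation. None of these requires a deep step; the core arguments are a one-dimensional concave maximization and pointwise-maximum monotonicity.
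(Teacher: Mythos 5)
The paper gives no proof of this lemma (it is ``omitted due to lack of space''), so there is nothing to compare against; judged on its own, your argument is the natural one and is essentially sound, but two points need tightening. First, the stationary point of $\PsiNRik=\Qik\log(1+\PRik)-X(k)\PRik/\Ts$ is $\PRik=\Qik\Ts/X(k)-1$, not $\Qik/X(k)-1$; this matches \eqref{H2O_Pow_On_Off} only under the normalization $\Ts=1$ (which the paper evidently adopts, since \eqref{PsiNRikst_On_Off} is exactly the optimal value you get by substituting $\Qik/X(k)-1$ with $\Ts=1$). You should state this rather than claim the computation ``yields exactly'' \eqref{H2O_Pow_On_Off}. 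Second, you attribute the single-NRT-user structure to the heavy-traffic assumption, but that assumption only guarantees the chosen user has enough backlog to fill the residual time $\mu=\Ts-\sum_{j\in\SRk}\mu_j(k)$; the concentration itself should be derived. It follows in one line: once each user's power is optimized to give the value $V_i(k)\geq 0$ independent of $\muRik$, the NRT part of \eqref{Max_Prob} is the linear program $\max\sum_{i\in\sNNR}V_i(k)\muRik$ subject to $\sum_{i\in\sNNR}\muRik\leq\mu$, $\muRik\geq 0$, whose maximum is attained by assigning all of $\mu$ to a maximizer of $V_i(k)$. Combined with your (correct) monotonicity observation that $V_i(k)$ is a common non-decreasing function of $\Qik$, this gives \eqref{ist}. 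With those two repairs the proof is complete; the concavity/projection argument for the power level and the $\gammaik=0$ edge case are handled correctly.
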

\begin{proof}
Proof is omitted due to lack of space.
\end{proof}

Lemma \ref{NRT_Lemma_On_Off} provides the optimal scheduling policy for the NRT users, at the $k$th slot, as well as the optimal power allocation w.r.t. problem \eqref{Max_Prob}. The lemma shows that if any of the NRT users is going to be scheduled in the $k$th slot, then only one of them is going to be scheduled. This means that the scheduling policy for the NRT users is
\begin{equation}
\muRik=\left\{
\begin{array}{lll}
	&\Ts-\sum_{i\in\SRstk}\muRik & i=\ist\\
	& 0 & \sNNR\backslash\{\ist\}
\end{array}
\right.
\label{mu_ist}
\end{equation}
which is a manipulation of \eqref{Single_Tx_at_a_Time}. Substituting \eqref{mu_ist} in \eqref{Max_Prob}, the latter becomes
\begin{equation}
\begin{array}{lll}
	&\max&\sum_{i\in\SRk}\PsiRik+\PsiNRistkst\muRistk\\
	&\text{subject to } &\eqref{P_max}, \eqref{mu_ist} \text{ and } \eqref{muRik_Constr_NRT}.
\end{array}
\label{Max_Prob_RT}
\end{equation}
where
\begin{equation}
\PsiNRikst\triangleq \Qik\log \lb\frac{\Qik}{X(k)}\rb-\Qik+X(k).
\label{PsiNRikst_On_Off}
\end{equation}

To find the scheduler of the RT users that is optimal w.r.t. problem \eqref{Max_Prob_RT}, we present the following lemma that has a lower complexity compared to the exhaustive search.
\begin{lma}
\label{NRik_Lemma_On_Off}
Given the optimal set $\SRkst$ of RT users that solves problem \eqref{Max_Prob}, if $i\in\SRkst$, $j\notin\SRkst$, $\gammaik=\gamma_j(k)=1$ and $\PRik=P_j(k)$ then $\Yik\geq Y_j(k)$.
\end{lma}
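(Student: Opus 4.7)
The plan is to use a standard exchange (swap) argument: assume for contradiction that $Y_j(k) > Y_i(k)$, construct an alternative schedule by removing $i$ from $\SRkst$ and inserting $j$, and show this new schedule achieves a strictly larger objective in problem \eqref{Max_Prob_RT}, contradicting optimality of $\SRkst$.

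First, I would verify that the swapped set $\SRkst \cup \{j\} \setminus \{i\}$ is feasible. Since $\gammaik = \gamma_j(k) = 1$ and $\PRik = P_j(k)$, we have $\Rik = R_j(k) = \log(1 + \PRik)$, so by \eqref{Num_Slots} we also have $\muRik = L/\Rik = L/R_j(k)$, i.e.\ the time consumed by $j$ in the new schedule equals the time previously consumed by $i$. Therefore the total RT duration is unchanged, the leftover time $\muRistk = \Ts - \sum_{i \in \SRstk}\muRik$ available for the NRT user $\ist$ is unchanged, and constraints \eqref{P_max}, \eqref{Single_Tx_at_a_Time}, \eqref{muRik_Constr_NRT} continue to hold.

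Next, I would compare the objective of \eqref{Max_Prob_RT} for the two schedules. The NRT contribution $\PsiNRistkst \muRistk$ is identical in both, since $\muRistk$ is unchanged and $\PsiNRistkst$ (see \eqref{PsiNRikst_On_Off}) depends only on $\Qik$ and $X(k)$, neither of which depend on the RT scheduling decision at slot $k$. So the only change is in the RT sum $\sum_{i\in\SRk}\PsiRik$. Using \eqref{PsiRik}, the contribution of a scheduled user is $\Yik - \frac{LX(k)\PRik}{\Ts\Rik}$, while an unscheduled user contributes $0$. Because $\PRik = P_j(k)$ and $\Rik = R_j(k)$, the ``cost'' term $\frac{LX(k)P}{\Ts R}$ is identical for both users, so the net change in objective from the swap is exactly $Y_j(k) - \Yik$.

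If $Y_j(k) > \Yik$, this change is strictly positive, contradicting the optimality of $\SRkst$; hence $\Yik \geq Y_j(k)$. The main obstacle is not calculation but rather being careful that the swap leaves all other decisions (especially $\ist$ and $\muRistk$) feasible and unchanged, so that the objective difference cleanly reduces to $Y_j(k) - \Yik$; once the equal-channel and equal-power hypotheses are used to cancel the rate and power terms, the argument is immediate.
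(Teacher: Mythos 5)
Your proposal is correct and is essentially the paper's own argument: a proof by contradiction via an exchange of $i$ and $j$, using the equal-power and equal-channel hypotheses to cancel the $\frac{LX(k)\PRik}{\Ts\Rik}$ cost terms so the objective changes by exactly $Y_j(k)-\Yik$. Your additional care in checking that the swap preserves the durations, the NRT leftover time, and feasibility of \eqref{P_max}--\eqref{muRik_Constr_NRT} is a welcome elaboration of details the paper leaves implicit, but it is not a different route.
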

\begin{proof}
We prove this lemma by contradiction. Suppose $i\in\sNRk$ and $j\notin\sNRk$ and suppose that $\Yik<Y_j(k)$. We can increase the objective function of \eqref{Max_Prob} by swapping the two users. This swapping increases the objective function since $\Yik<Y_j(k)$ and the quantity $X(k)\PRik \muRik=X(k)P_j(k) \mu_j(k)$ based on the fact that $\PRik=P_j(k)$.
\end{proof}

Lemma \ref{NRik_Lemma_On_Off} says that if the power allocation policy results in a equal power allocation for all RT users, then there will be no scheduled RT users having a value of $Y_j(k)$ smaller than any of the unscheduled RT users. This lemma suggests an algorithm to reduce the complexity of scheduling the RT users from $O\lb2^{\NR}\rb$ to $O\lb \NR\rb$. This algorithm is achieved by listing the RT users in a descending order of their $Y_i(k)$. Without loss of generality, in the remaining of this paper we will assume that $Y_1>Y_2\cdots >Y_{\NR}$. We present the following definition then present a theorem that discusses a necessary condition for the optimum power allocation policy for the RT users.

\begin{Def}
\label{Lambert_Pow_Policy}
We define the Lambert power allocation policy for the RT users as
\begin{equation}
\PRik=\frac{\frac{\Ts\Psi_{\rm NR}^*(\ist,k)}{X(k)}-1}{\LW{\left[\frac{\Psi_{\rm NR}^*(\ist,k)\Ts}{X(k)}-1\right]}}-1, \hspace{0.1in} i\in\SRk,
\label{Lambert_Pow_On_Off}
\end{equation}
where $W_0(z)$ is the principle branch of the Lambert W function \cite{Lambert_W_Function} while $\PsiNRikst$ is given in \eqref{PsiNRikst_On_Off}.
\end{Def}

\begin{thm}
\label{Pow_Alloc_Nec_Thm}
Given any scheduled set of RT users $\SRk$, if the Lambert power policy results in $\sum_{i\in\SRk}L/\log(1+\PRik)\leq\Ts$, then it is the optimum RT-users' power allocation policy. Otherwise, the optimum power allocation policy is given by
\begin{equation}
\PRik=e^{\frac{\vert\SRk\vert L}{\Ts}}-1,\hspace{0.25in} i\in\SRk,
\label{RT_NO_NRT_Pow}
\end{equation}
and no NRT users should be scheduled in slot $k$.
\end{thm}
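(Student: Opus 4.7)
The plan is to solve \eqref{Max_Prob_RT} directly in the variables $\{\PRik\}_{i\in\SRk}$ after using \eqref{Num_Slots} and \eqref{mu_ist} to eliminate the $\mu$'s. Substituting $\muRik=L/\log(1+\PRik)$ for $i\in\SRk$ and $\muRistk=\Ts-\sum_{i\in\SRk}L/\log(1+\PRik)$ into the objective of \eqref{Max_Prob_RT} and dropping additive terms that do not depend on the $\PRik$'s, the problem collapses to the separable minimization
\[
\min_{0\le \PRik\le\Pmax}\,\sum_{i\in\SRk}\frac{L}{\log(1+\PRik)}\lb\frac{X(k)\PRik}{\Ts}+\PsiNRistkst\rb
\]
subject to the single coupling inequality $\sum_{i\in\SRk}L/\log(1+\PRik)\le\Ts$, which is just $\muRistk\ge 0$ rewritten. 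I would then attach a multiplier $\lambda\ge 0$ to this coupling inequality and split by complementary slackness.

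\textbf{Case A} (coupling inactive, $\lambda=0$). The minimization decouples across $i$, so each $\PRik$ is the stationary point of $h(P)\triangleq(X(k)P/\Ts+\PsiNRistkst)/\log(1+P)$. Setting $h'(P)=0$ gives $(1+P)\log(1+P)=P+\Ts\PsiNRistkst/X(k)$. Writing $u=1+P$ converts this to $u(\log u-1)=A$ with $A\triangleq \Ts\PsiNRistkst/X(k)-1$, and the further substitution $z=\log u-1$ reduces it to $ze^z=A/e$, so $z=\LW{A}$. Using $u=A/z$ (which follows from $ze^z=A/e$) and $\PRik=u-1$ reproduces exactly \eqref{Lambert_Pow_On_Off}. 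Because $\PsiNRistkst=\Qik\log(\Qik/X(k))-\Qik+X(k)\ge 0$ (this comes from $t\log t-t+1\ge 0$ with $t=\Qik/X(k)$) and $h$ is coercive as $P\to 0^+$ and $P\to\infty$, this stationary point is the unique interior minimizer.

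\textbf{Case B} (coupling active). If the Case-A formula violates $\sum_{i\in\SRk}L/\log(1+\PRik)\le\Ts$, then the coupling must be tight at the optimum, and \eqref{mu_ist} forces $\muRistk=0$, so no NRT user is scheduled in slot $k$. On this active manifold the $\PsiNRistkst$ contribution is constant, and the problem reduces to $\min\sum_{i\in\SRk}\PRik/\log(1+\PRik)$ subject to $\sum_{i\in\SRk}1/\log(1+\PRik)=\Ts/L$. The natural change of variables $a_i\triangleq 1/\log(1+\PRik)$ turns the objective into $\sum_i a_ie^{1/a_i}-\sum_i a_i$ with $\sum_i a_i=\Ts/L$ fixed; a direct calculation gives $g''(a)=e^{1/a}/a^3>0$, so $g(a)=ae^{1/a}$ is strictly convex, and Jensen forces all $a_i$ equal, yielding $\log(1+\PRik)=\vert\SRk\vert L/\Ts$, i.e., \eqref{RT_NO_NRT_Pow}. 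The main obstacle is the Lambert-$W$ algebra in Case A (computational but fiddly) together with the convexity check for $g$; a minor auxiliary check is that the box constraint $\PRik\le\Pmax$ is not binding, otherwise the closed-form expressions would need to be clipped.
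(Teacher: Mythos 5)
Your proposal is correct and follows essentially the same route as the paper, which proves the theorem by applying Lagrange optimization to problem \eqref{Max_Prob_RT} and then splitting cases via complementary slackness on the time-budget constraint. Your write-up simply fills in the details the paper's sketch omits (the Lambert-$W$ algebra for the inactive case and the convexity/Jensen argument forcing equal powers in the active case), and both check out.
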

\begin{proofsketch}
This theorem is proved by applying the Lagrange optimization \cite[Ch. 5]{cvx_Boyd} technique to problem \eqref{Max_Prob_RT} then using the complementary slackness condition.

\end{proofsketch}
Under the Lambert power policy let's define $l$ as the number of RT users can be scheduled in slot $k$under the Lambert policy. $l$ is given by
\begin{equation}
l\triangleq\left\lfloor\frac{\Ts}{\muRik}\right\rfloor
\label{Critical_Index}
\end{equation}
Before presenting the algorithm that solves problem \eqref{Max_Prob_RT} and the theorem behind it we present the following two conditions on $l$ that will facilitate the understanding of the algorithm and the presentation of the theorem. 
\begin{cond}
$l=0$.
\label{Large_Duration_Cond1}
\end{cond}
\begin{cond}
$0<l\leq\NR$ and the following two inequalities hold $Y_l(k)>\left[X(k)P_l(k)+\PsiNRistkst\right]\mu_l(k)$ and $Y_{l+1}(k)\geq X(k) \left[\Ts \lb e^{\frac{(l+1)L}{\Ts}}-1\rb-P_l(k)l\mu_l(k)\right]
+\PsiNRistkst\lb \Ts-l\mu_l(k)\rb$
\label{Large_Duration_Cond2}
\end{cond}
Condition \ref{Large_Duration_Cond1} means that the duration $\muRik$ of one RT user is greater than the slot duration under the Lambert policy. On the other hand, Condition \ref{Large_Duration_Cond2} means that, roughly speaking, the $\Yik$ values are very high to the extent that the RT users are suffering more than the NRT users during slot $k$. The next theorem shows that when any of Conditions \ref{Large_Duration_Cond1} or \ref{Large_Duration_Cond2} holds, the BS should schedule only RT users during slot $k$.

\begin{thm}
\label{Thm_of_Algorithm_On_Off}
To solve \eqref{Max_Prob_RT}, if either Condition \ref{Large_Duration_Cond1} or Condition \ref{Large_Duration_Cond2} holds, then the optimal scheduling for the RT users is given by
\begin{equation}
\SRk=\left\{ i:Y_i>X(k)\Ts \lb e^{\frac{iL}{\Ts}}-e^{\frac{(i-1)L}{\Ts}}\rb\right\}.
\label{RT_NO_NRT_Sch}
\end{equation}
On the other hand if neither of these conditions holds, then the optimal scheduling policy for the RT users is
\begin{equation}
\SRk=\left\{ i: Y_i>\left[X(k)\PRik+\PsiNRistkst\right]\muRik\right\}
\label{Lambert_Sch_On_Off}
\end{equation} while that of the NRT users is given by \eqref{ist} and \eqref{mu_ist}.
\end{thm}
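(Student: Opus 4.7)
The approach is to combine Theorem~\ref{Pow_Alloc_Nec_Thm}'s power-allocation dichotomy with a marginal-gain argument on the descending-$\Yik$ list supplied by Lemma~\ref{NRik_Lemma_On_Off}. The starting point is that Lemma~\ref{NRT_Lemma_On_Off} has already reduced \eqref{Max_Prob_RT} to choosing only the RT subset $\SRk$ and its power levels, since any scheduled NRT user must be $\ist$ and fills whatever time remains, contributing $\PsiNRistkst\muRistk$ with $\muRistk=\Ts-\sum_{i\in\SRk}\muRik$. By Lemma~\ref{NRik_Lemma_On_Off} the optimal $\SRk$ must be a prefix of the descending-$Y$ list, so the problem collapses to selecting a single threshold index together with the power policy that governs that prefix.

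Next, I would invoke Theorem~\ref{Pow_Alloc_Nec_Thm} to restrict attention to two possibilities for the RT power profile: the Lambert policy \eqref{Lambert_Pow_On_Off} when $\sum_{i\in\SRk} L/\log(1+\PRik)\leq\Ts$ (allowing NRT coexistence), or the uniform policy \eqref{RT_NO_NRT_Pow} which consumes the entire slot with no room for an NRT user. I would then argue that Conditions~\ref{Large_Duration_Cond1}--\ref{Large_Duration_Cond2} are precisely the cases in which the latter regime is optimal. Condition~\ref{Large_Duration_Cond1} ($l=0$) is immediate: a single RT user under Lambert already overshoots $\Ts$ seconds, so the Lambert branch is infeasible and \eqref{RT_NO_NRT_Pow} is forced. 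Condition~\ref{Large_Duration_Cond2} is the subtler case where Lambert is feasible but suboptimal; the two inequalities listed there are exactly the marginal-value comparisons at the switching index $l$ that certify that any Lambert-with-NRT configuration is strictly dominated by squeezing the slot with uniform-power RT users.

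For the first branch (either Condition holds), I would substitute the uniform policy into the objective of \eqref{Max_Prob_RT}, obtain a compact closed form for $\sum_{i=1}^n \PsiRik$ in terms of $n$, $X(k)$, and $\sum_{i=1}^n Y_i(k)$ by using $\PRik = e^{nL/\Ts}-1$ and $\muRik=\Ts/n$ from \eqref{RT_NO_NRT_Pow}, and then compute the marginal gain of adding the $n$-th user in the descending-$Y$ ordering. Because the power-cost marginal grows geometrically in $n$ (through $e^{nL/\Ts}-e^{(n-1)L/\Ts}$) while $Y_n(k)$ is decreasing in $n$, the set of beneficial additions is a prefix, and the threshold rearranges to \eqref{RT_NO_NRT_Sch}.

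For the second branch (neither Condition holds), Lambert power is in force and $\ist$ participates with time share $\muRistk$. The marginal gain of promoting the next unscheduled RT user $i$ into $\SRk$ equals its $\PsiRik$ minus the value $\PsiNRistkst\muRik$ that $\ist$ forfeits when $\muRistk$ shrinks by $\muRik$, i.e.\ $\Yik-X(k)\PRik\muRik/\Ts-\PsiNRistkst\muRik$. Setting this to be positive and using Lemma~\ref{NRik_Lemma_On_Off} to restrict to prefixes yields \eqref{Lambert_Sch_On_Off}, with the NRT piece pinned down by \eqref{ist} and \eqref{mu_ist}. The hardest part will be the tight characterization in Condition~\ref{Large_Duration_Cond2}: showing that its two inequalities are simultaneously necessary and sufficient to cross over from the Lambert branch to the uniform branch requires a careful monotonicity argument comparing marginal gains across both power policies as the threshold index is incremented past $l$.
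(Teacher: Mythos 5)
The paper gives no proof of this theorem (it is explicitly omitted), so there is nothing to compare against line by line; judged on its own, your skeleton --- Lemma~\ref{NRik_Lemma_On_Off} to restrict to prefixes of the descending-$Y$ list, Theorem~\ref{Pow_Alloc_Nec_Thm} to restrict the power profile to the Lambert/uniform dichotomy, and a marginal-gain computation whose rearrangement yields the two thresholds --- is the natural route, and your marginal-gain algebra is right: under the Lambert policy every scheduled RT user receives the same power and the same $\muRik$, so the objective is separable and adding user $i$ changes it by $\Yik-\left[X(k)\PRik/\Ts+\PsiNRistkst\right]\muRik$, giving \eqref{Lambert_Sch_On_Off}; under the uniform policy the total objective with $n$ users is $\sum_{i=1}^{n}Y_i(k)-X(k)\Ts\lb e^{nL/\Ts}-1\rb$ (up to the paper's implicit $\Ts=1$ normalization), whose first difference gives \eqref{RT_NO_NRT_Sch} and is monotone because $Y_n$ decreases while the exponential increment increases. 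One small point you should make explicit: Lemma~\ref{NRik_Lemma_On_Off} only licenses the prefix structure when all scheduled RT users get \emph{equal} power, so you must observe that both candidate policies \eqref{Lambert_Pow_On_Off} and \eqref{RT_NO_NRT_Pow} have this property.

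The genuine gap is the one you flag but do not close, and it is the substance of the theorem: you assert that Conditions~\ref{Large_Duration_Cond1}--\ref{Large_Duration_Cond2} are ``precisely'' the crossover to the RT-only uniform regime, but the two inequalities in Condition~\ref{Large_Duration_Cond2} are local comparisons at the single pair of indices $l$ and $l+1$, whereas optimality requires dominating \emph{every} admissible pair (prefix length, power policy). You need the monotonicity argument that propagates these two local inequalities to all $n$: e.g., that once switching to the uniform regime is profitable at $l+1$ it remains the best available option, and conversely that when neither condition holds the threshold set in \eqref{Lambert_Sch_On_Off} has cardinality at most $l$ and therefore actually fits in the slot (otherwise \eqref{Lambert_Sch_On_Off} would name an infeasible schedule). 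A second, quieter gap sits in the first branch: the threshold in \eqref{RT_NO_NRT_Sch} compares $Y_i$ only against the incremental power cost and carries no $\PsiNRistkst$ term, so your proof must also rule out the alternative of scheduling fewer RT users (or none) and returning time to $\ist$; this is presumably where the conditions do their work, but your proposal never performs that comparison. Until those two monotonicity/exhaustiveness arguments are written down, the proof is an outline rather than a proof.
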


\begin{proof}
The proof is omitted due to lack of space.
\end{proof}


\subsection{Proposed Algorithm}
We now propose Algorithm \ref{Scheduling_Alg} which is the scheduling and power allocation algorithm for problem \eqref{Prob_DL}. Then we present the motivation and optimality of this algorithm in Sections \ref{Motivation_DL} and \ref{Optimality_DL_Section}, respectively. Algorithm \ref{Scheduling_Alg} is executed at the beginning of the $k$th slot and, without loss of generality, it assumes: 1) all RT users in the system have received a packet at the beginning of the $k$th slot, 2) all NRT users have non-empty buffers, and 3) all users in the system have an ``on'' channel. If, at some slot, any of these assumptions does not hold for some users, these users are eliminated from the system for this slot.

\begin{algorithm}
\caption{Scheduling and Power Allocation Algorithm}
\begin{algorithmic}[1]
\label{Scheduling_Alg}
\STATE Sort the RT users in a descending order of $\Yik$. Without loss of generality, assume that $Y_1>Y_2\cdots>Y_{\NR}$.
\STATE Find the user $\ist$ according to \eqref{ist}.
\STATE Set the power according to \eqref{Lambert_Pow_On_Off} for all RT users.
\STATE Calculate $\muRik$ and $l$ using \eqref{Num_Slots} and \eqref{Critical_Index}, respectively.
\IF {Condition \ref{Large_Duration_Cond1} OR Condition \ref{Large_Duration_Cond2} holds}
\STATE Set $\muRik=0$ for all $i\in\sNNR$ and set the scheduling and power allocation of the RT users according to \eqref{RT_NO_NRT_Sch} and \eqref{RT_NO_NRT_Pow}, respectively.
\ELSE
\STATE Schedule the RT users according to
\begin{equation}
\muRik=\left\{
\begin{array}{lll}
	&1 & i\in\SRk\\
	&0 & \mbox{otherwise}
\end{array}
\right.
\label{mu_RT}
\end{equation}
where $\SRk$ is given in \eqref{Lambert_Sch_On_Off} and set the RT users' powers according to \eqref{Lambert_Pow_On_Off}.
\STATE Schedule the NRT users according to \eqref{mu_ist} and set user $\ist$'s power $\PRistk$ via \eqref{H2O_Pow_On_Off}.
\ENDIF
\STATE Set $r_i(k)=a_i(k)$ if $Q_i(k)<V$ and $0$ else, $i\in\sNNR$.
\STATE Update \eqref{Queues}, \eqref{DL_VQ} and \eqref{P_avg_VQ} at the end of the $k$th slot.
\end{algorithmic}
\end{algorithm}

\subsection{Optimality of Proposed Algorithm}
\label{Optimality_DL_Section}
We first define $\bRopt$ to be the throughput of NRT user $i$ under the optimal algorithm that solves \eqref{Prob_DL}. The following theorem gives a bound on the performance of Algorithm \ref{Scheduling_Alg} compared to the optimal algorithm that has a genie-aided knowledge of $\bRopt$ which, we show that, due to this knowledge it can solve the problem optimally. 

\begin{thm}
\label{Optimality_Thm}
If $\gammaikm\in\{0,1\}$ for all $i\in\sN$, $k\geq1$ and all $m\in\sM$, then for any $V>0$ and any $\epsilon\in(0,1]$ Algorithm \ref{Scheduling_Alg} results in satisfying all constraints in \eqref{Prob_DL} and achieves an average rate satisfying
\begin{equation}
\sum_{i\in\sNNR} \bmuRi\geq \sum_{i\in\sNNR}{\bRopt} - \frac{C_1}{LV}.
\label{Optimality_Eq}
\end{equation}
\end{thm}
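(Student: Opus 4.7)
The proof proceeds by the standard drift-plus-penalty argument, so the plan is to choose the natural penalty, bound the drift-plus-penalty, show that Algorithm~\ref{Scheduling_Alg} minimizes this bound per slot, and then compare to an omniscient randomized benchmark. Concretely, define the drift-plus-penalty
\begin{equation}
\Delta(k) - V L \sum_{i\in\sNNR}\EEU{r_i(k)}.
\end{equation}
Adding this penalty to \eqref{Drift_Bound} and collecting the terms of $\Psi(k)$ in \eqref{Drift_Min} that depend on the controls gives an upper bound of the form $C_1 + F(k) + G(k)$, where $F(k)$ is the per-slot scheduling/power objective in \eqref{Max_Prob} and $G(k)=\sum_{i\in\sNNR} L r_i(k)(Q_i(k)-V)$ is the admission-control term. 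These two parts are separable in the controls, so they can be minimized independently.

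The second step is to show that Algorithm~\ref{Scheduling_Alg} minimizes the per-slot bound over all feasible decisions. The admission rule in line~11 is precisely the minimizer of $G(k)$: set $r_i(k)=a_i(k)$ when $Q_i(k)<V$ and zero otherwise (respecting \eqref{Admission_Decision}). The joint scheduling-and-power steps in lines~1--10 minimize $F(k)$: Lemma~\ref{NRT_Lemma_On_Off} handles the NRT user and reduces $F(k)$ to \eqref{Max_Prob_RT}, then Lemma~\ref{NRik_Lemma_On_Off} reduces the RT subset search from $2^{\NR}$ to $\NR$ candidates by a sort on $Y_i(k)$, and Theorems~\ref{Pow_Alloc_Nec_Thm}--\ref{Thm_of_Algorithm_On_Off} dispatch the two cases (Lambert power policy vs.\ RT-only slot) that cover all Lagrangian regimes of \eqref{Max_Prob_RT}. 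Hence, along the Algorithm~\ref{Scheduling_Alg} sample path,
\begin{equation}
\Delta(k) - V L \sum_{i\in\sNNR}\EEU{r_i(k)} \leq C_1 + F^{\pi}(k) + G^{\pi}(k)
\end{equation}
for \emph{every} other feasible (and in particular randomized stationary) policy $\pi$.

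The third step invokes the existence of an $\epsilon$-slackened stationary randomized benchmark $\pi^*$, parameterized by $\epsilon\in(0,1]$, whose per-slot decisions depend only on the channel state and packet arrivals, satisfying \eqref{RT_QoS}--\eqref{P_avg} with $\epsilon$-slack and attaining admission sum-rate at least $\sum_{i\in\sNNR}\bRopt-O(\epsilon)$. Plugging $\pi^*$ into the previous inequality, $\EEU{F^{\pi^*}(k)}$ turns the $Y_i(k)$, $X(k)$, and $Q_i(k)$ coefficients into quantities bounded above by $-\epsilon\,(\sum_i Y_i(k)+X(k)+\sum_i Q_i(k))$ (using the $\epsilon$-slack), and $\EEU{L\sum_i r_i^{\pi^*}(k)}\geq L(\sum_i \bRopt-O(\epsilon))$.

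The final step is a telescoping argument. Taking expectations, summing from $k=1$ to $K$, dividing by $K$, and using the non-negativity of $L(U(K))$ and $L(U(1))<\infty$, the penalty terms yield
\begin{equation}
\liminf_{K\to\infty}\frac{1}{K}\sum_{k=1}^{K}\sum_{i\in\sNNR}\EE{r_i(k)} \;\geq\; \sum_{i\in\sNNR}\bRopt-\frac{C_1}{LV},
\end{equation}
after sending $\epsilon\to 0$. In parallel, the $-\epsilon(\sum_i Y_i(k)+X(k)+\sum_i Q_i(k))$ piece together with the boundedness of the one-step increments gives $\EE{L(U(K))}=o(K)$, which is exactly mean rate stability for $Y_i$, $X$, and (since $r_i(k)\leq a_i(k)$ bounds arrivals) $Q_i$. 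Lemma~\ref{Mean_Rate_Lemma} and its analogs then certify that \eqref{NRT_QoS}--\eqref{P_avg} hold, and mean rate stability of $Q_i$ converts the admission bound into the service-rate bound \eqref{Optimality_Eq}. The main obstacle is the third step: constructing the $\epsilon$-slack randomized benchmark that achieves $\sum_i\bRopt-O(\epsilon)$ despite the mixed RT deadlines, NRT queueing, on-off channel combinatorics, and the coupling constraint \eqref{Single_Tx_at_a_Time}; this is a standard but technical Caratheodory-type argument on the feasible per-slot decision polytope, leveraging the i.i.d.\ channel and arrival statistics.
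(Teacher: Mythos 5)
Your proposal is correct and follows essentially the same route as the paper's (sketched) proof: a drift-plus-penalty bound minimized per slot by Algorithm \ref{Scheduling_Alg}, comparison against an $\epsilon$-slack genie-aided/randomized stationary benchmark, and a telescoping argument yielding both mean rate stability of $Q_i$, $Y_i$, $X$ (hence feasibility via Lemma \ref{Mean_Rate_Lemma}) and the $C_1/(LV)$ gap. The paper omits all details beyond this two-part outline, so your write-up is a faithful, more explicit elaboration of the intended argument.
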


\begin{proofsketch}
We divide the proof into two parts. First, we show that the queues (real and virtual) are mean rate stable. This proves that constraints \eqref{NRT_QoS}, \eqref{RT_QoS} and \eqref{P_avg} are satisfied. Second, through the Lyapunov optimization technique we show that the drift-minus-reward term is within a constant gap from the performance of the optimal, genie-aided algorithm \cite{georgiadis2006resource,urgaonkar2011optimal}. The details of the proof is omitted due to lack of space.
\end{proofsketch}

Theorem \ref{Optimality_Thm} says that Algorithm \ref{Scheduling_Alg} yields an objective function \eqref{Prob_DL} that is arbitrary close to the performance of the optimal genie-aided algorithm that solves \eqref{Prob_DL}.

\section{Simulation Results}
\label{Results}
We assume that all channels are statistically homogeneous, i.e. $\bgamma_i=\bgamma$ for all $i\in\sN$ where $\bgamma$ is a fixed constant. Moreover, all RT users have homogeneous delivery ratio requirements, thus $q_i=q$ for all $i\in\sNR$ for some parameter $q$. All parameter values are summarized in Table \ref{Parameters} for all simulation figures unless otherwise specified.

We compare the throughput of the RT users, which is the objective of problem \eqref{Prob_DL}, to that of a simple power allocation and scheduling algorithm that we call ``FixedP'' algorithm. In the FixedP algorithm, all scheduled users transmit with the maximum power, i.e. $\PRik=P_{\rm max}$ for all $i\in\sN$ and all $k\geq1$, while the scheduling policy is to flip a biased coin and choose to schedule either the NRT users or the RT users. The coin is set to schedule the RT users with probability $q$ (the delivery ratio requirement for all users), at which case the RT users are sorted according to $Y_i(k)$ and scheduled one by one until the current slot ends. On the other hand, when the coin chooses the NRT users, the FixedP policy assigns the entire time slot to the NRT user with the longest queue.

\begin{table}
	\centering
		\caption{Simulation Parameter Values}
		\label{Parameters}
		\begin{tabular}{|c|c||c|c|}
			\cline{1-4}
			Parameter & Value & Parameter & Value \\
			\cline{1-4}
			$L$ & $1$ bit/packet &  $\Pmax$ & 200\\
			$V$ & $10^4$ & $\bgamma_i$, $\forall i$ & $1$\\
			$\left\{q_i\right\}_{i\in\sNR}$& $0.3$ & $\Ts$ & $1$\\
			\cline{1-4}
			\end{tabular}
\end{table}

We assume that we have $N=20$ users that is split equally between the RT and NRT users, i.e. $\NR=\NNR=20$. Fig. \ref{Algorithm1_vs_FixedP_P2_10} shows a substantial increase in the average rate of the proposed algorithm over the FixedP algorithm with over $200\%$ at low $\Pavg$ values and $60\%$ at high $\Pavg$ values.

\begin{figure}%
\centering
\includegraphics[width=0.70\columnwidth]{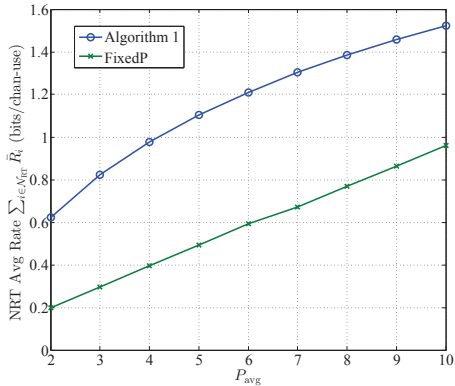}%
\caption{Sum of average throughput for all NRT users. The FixedP algorithm assigns a fixed power to all users set at $\Pmax$.}%
\label{Algorithm1_vs_FixedP_P2_10}%
\end{figure}

In Fig. \ref{Algorithm1_vs_FixedP_Qp1_8}, the sum of average NRT users' throughput is plotted while keeping $\Pavg=10$ but changing $q$. We can see that the FixedP algorithm results in a large degradation in the throughput compared to Algorithm \ref{Scheduling_Alg} which allocates the power and schedules the users optimally with respect to \eqref{Prob_DL}. The decrease in the throughput observed in both curves of Fig. \ref{Algorithm1_vs_FixedP_Qp1_8} is due to the increase in the parameter $q$. This increase makes constraint \ref{RT_QoS} more stringent and thus decreases the feasible region decreasing the throughput.

\begin{figure}%
\vspace{-0.2in}\centering
\includegraphics[width=0.7\columnwidth]{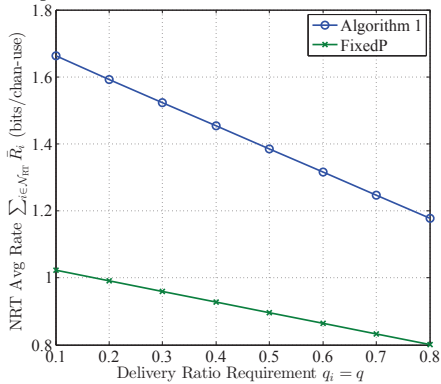}%
\caption{As $q$ increases, the RT users are assigned the channel more frequently. This comes at the expense of the NRT's throughput. However, the proposed algorithm outperforms the FixedP algorithm.}%
\vspace{-0.1in}\label{Algorithm1_vs_FixedP_Qp1_8}
\end{figure}


\section{Conclusions}
\label{Conclusion}
We discussed the problem of throughput maximization in downlink cellular systems in the presence of RT and NRT users. We formulated the problem as a joint power-allocation-and-scheduling problem. Using Lyapunov optimization theory, we presented an algorithm to optimally solve the throughput maximization problem. While we showed that the NRT power allocation is water-filling-like, the RT power allocation has a totally different structure that we provide in a closed-form expression and refer to as the ``Lambert Power Allocation''. The proposed algorithm is shown to have a polynomial-time complexity. Our simulations show that the proposed algorithm outperforms existing ones.




\bibliographystyle{IEEEbib}
\bibliography{MyLib}

\end{document}